\documentclass[12pt]{article}

\usepackage[dvipsnames]{xcolor}
\usepackage{amsmath,amsthm,amssymb,natbib,verbatim,cancel, bbm, tikz, tkz-euclide,graphicx,subcaption, float,titling}
\usepackage{natbib}
\usepackage{epigraph}
\usepackage{cleveref}
\usepackage{xfrac}
\usepackage{mathtools}

\theoremstyle{definition}
\newtheorem{theorem}{Theorem}[section]
\newtheorem{definition}[theorem]{Definition}

\newtheorem{example}[theorem]{Example}

\newcommand{\R}{\mathbb{R}}

\usepackage[margin=1in]{geometry} 
\usepackage{setspace}

\title{In Search of Lost Correlation: Correlated Equilibrium via Marginal Actions}
\author{Christopher P. Chambers\thanks{Department of Economics, Georgetown University}\\ Maxime  Cugnon de Sévricourt\thanks{Department of Economics, Georgetown University}\\ Christopher Turansick\thanks{Department of Decision Sciences, Bocconi University}}

\begin{document}

\maketitle
\begin{abstract}
    In this paper, we study which data can be induced by a correlated equilibrium given a known finite simultaneous move game. We assume that an analyst has access to the frequency of each agent's actions but does not have access to the distribution over joint action profiles. We characterize which sets of marginal distributions over actions arise from some correlated equilibria via a type of no arbitrage condition. An outside observer is unable to make a profit in expectation by independently contracting with each agent and collecting a portion of the total utility gained via unilateral deviation. This characterization naturally extends to Nash equilibria.
\end{abstract}


\onehalfspacing
\section{Introduction}

Correlated equilibrium is a powerful tool which allows modelers to capture collusion \citep{dutta2018collusion}, Bayesian rationality given private information \citep{aumann1987correlated}, and pre-play communication \citep{forges2012correlated}. In order to impose antitrust law, tests for collusion have been developed in various environments \citep{bajari2003deciding,aryal2013testing,baranek2025detection}. Further, testing for rationality is the fundamental goal of revealed preference and there have been many tests developed in order to verify consistency with Bayesian rationality \citep{chambers2023coherent,chan2025axiomatic}. There is also a long literature experimentally analyzing pre-play communication \citep{crawford1998survey}. In this paper, we develop a test for correlated equilibria, a common foundation for collusion, Bayesian rationality under uncertainty, and pre-play communication. 

Under full observability of the underlying game and the distribution of the players' joint actions, one can directly test if play constitutes a correlated equilibrium. However, we consider a situation where an analyst observes the frequency with which each player plays an action, but not the joint frequency. As one possible example, suppose that an analyst is trying to test for collusion in a series of government-regulated procurement auctions. The analyst could have access to the history of each firm's bids, but not the joint history, which might be kept confidential. Alternatively, suppose that an analyst is aiming to understand the impact of pre-election (i.e. pre-play) communication on voter turnout. The analyst assumes that agents are strategic in their decision whether or not to vote. In this case, the analyst has access to surveys regarding each individual voter's turnout probabilities, but these surveys are conducted individual by individual and thus do not capture the correlation between one voter and their friends or family.

Our main result shows that the data we consider is consistent with play by some correlated equilibrium if and only if it is immune to exploitation by an outside observer. More precisely, interpreting the payoffs of the game as monetary payoffs, we imagine that an outside observer has access to player's marginals but not the joint distributions. She can offer recommendations to the players, that is she can suggest an alternative distribution over actions, 
but each recommendation comes with a fee. If the observer can make a profit in expectation from such fees, then the strategy profile under consideration is not an equilibrium. This characterization naturally extends to the case of Nash equilibria as well. The main difference between the two characterizations is that, in the case of Nash equilibrium, the recommender is able to condition their fee on the joint realization of action profiles while, in the case of correlated equilibrium, the recommender can only condition their fee on the marginal realization of action profiles. This is a direct consequence of Nash equilibria being defined via the independent mixture of marginal action profiles. Indeed, in the Nash equilibrium case, the recommender only needs to secure a positive expected profit with respect to the joint distribution over actions. In contrast, in the correlated equilibrium case, the recommender must guarantee a profit against every joint distribution over actions that is consistent with players' marginals.

The paper is organized as follows. In \Cref{sec:literature}, we review the relevant literature. In \Cref{sec:model}, we describe the model and delve into the characterization of correlated equilibrium in \Cref{sec:correlated equilibrium} and Nash equilibrium in \Cref{sec: Nash equilibrium}. We conclude in \Cref{sec:conclusion}.

\subsection{Related Literature} \label{sec:literature}
Our paper is related to three strands of literature. The first is the strand which studies correlated equilibria.  The concept of a correlated equilibrium was first introduced in \citet{Aumann1974} and further studied in \citet{aumann1987correlated}.  Since its introduction, many papers have studied various properties of correlated equilibria including its foundation as a result of an evolutionary process \citep{hart2000simple}, its foundation as a result of pre-play communication \citep{forges2012correlated}, as well as the computational costs of computing correlated equilibria \citep{papadimitriou2008computing}. Of particular interest to us are those papers which study the value of correlated equilibria. This includes \citet{ashlagi2008value}, \citet{bradonjic2014price}, and \citet{rudov2025extreme} which compare the value of correlated equilibria, under various objectives, to the value of Nash equilibria.

Our paper also contributes to the strand of literature which studies the empirical content of various solution concepts in strategic settings. In the context of finite normal form games, \citet{sprumont2000testable} and \citet{haile2008empirical} study the empirical content of Nash equilibria and quantal response equilibria, respectively, within the context of a single game. \citet{brandl2024axiomatic} studies the empirical content of Nash equilibria across several games while \citet{sandomirskiy2025monotonicity} studies the empirical content of both Nash equilibria and quantal response equilibria in the context of play across several games. Moving to the setting of sequential games, \citet{bossert2013every} and \citet{rehbeck2014every} find that every choice function and choice correspondence is rationalizable by backwards induction.

Finally, our paper also contributes to the literature which studies the decision theoretic foundations of play in finite simultaneous move games. Within this literature, the most closely related to our own is the work of \citet{Nau1990}. The authors find that a strategy profile is played with positive probability in some correlated equilibrium if and only if it satisfies a series of joint coherency axioms. These joint coherency axioms are posed in terms of bets and arbitrage opportunities by an outside observed. Our main characterization instead focuses on joint profiles of marginal distributions over actions and characterizes when these arise from a correlated equilibrium. Our main axiom is stated in terms of an outside observer (not) making profit by prescribing unilateral deviations to each player in the game. Beyond the work of \citet{Nau1990}, \citet{bernheim1984rationalizable} introduces and studies the axiomatic foundations of rationalizable strategy profiles. \citet{brandenburger1987rationalizability} extends this to analyze a type of correlated rationalizability which ends up being more general concept than joint coherency of \citet{Nau1990}.

\section{The Model} \label{sec:model}

Let $N$ be a finite set of players and for each $i\in N$, $A_i$ is a finite set of strategies or actions corresponding to player $i$. We let $A = \prod_{N}A_i$, with typical element $a$, denote the set of \textbf{action profiles}. Given an action profile $a$, we use $a_i$ to denote the action of agent $i$ and $a_{-i}$ to denote the vector of actions by each player which is not $i$. Each player or agent has a utility function over action profiles $u_i:A\rightarrow\mathbb{R}$.

Given a finite set $B$, we let $\Delta(B)$ denote the set of probability distributions over $B$. A \textbf{marginal strategy profile} $p$ is a list $(p_1,\ldots,p_n)$, where $p_i\in\Delta(A_i)$ for all $i\in N$. A \textbf{joint strategy profile} is some joint distribution $q \in \Delta(A)$. For a joint strategy profile $q$, we use $p_i=\mbox{marg}_{A_i}q$, to denote the marginal distribution of $q$ over $A_i$.

\subsection{Correlated Equilibrium} \label{sec:correlated equilibrium}

Recall the definition of correlated equilibrium \citep{Aumann1974}. 

\begin{definition}
    A distribution over action profiles $q \in \Delta (A)$ is a \textbf{correlated equilibrium} if for all $i\in N$ and all $a_i,a_i'\in A_i$: $$\sum_{a_{-i}}q(a_i,a_{-i})\left [u_i(a_i,a_{-i})-u_i(a'_i,a_{-i}) \right ]\geq 0.$$ 
\end{definition}

In words, a correlated equilibrium is a distribution over action profiles such that, conditional upon learning that she is to play $a_i$, player $i$ has no profitable deviation. A common foundation for this solution concept goes as follows. A mediator selects an action profile $a$ at random according to a joint strategy profile $q$, and then suggests to each player $i$ to take action $a_i$. If $q$ is correlated equilibrium, each player is willing to follow the mediator's suggestion, and therefore $a$ is played.

We assume that the analyst does not observe the full distribution over action profiles $q$, but she observes the marginals over player's actions, that is, she observes a marginal strategy profile $p$. In addition, the analyst observes the utility functions of all the players. This is equivalent to the analyst knowing the underlying game driving players' actions. The analyst is interested in determining whether the observed marginal strategy profile is compatible with the players of the game playing some correlated equilibrium. 
\begin{definition}
A marginal strategy profile $p = (p_1, p_2, \dots, p_N)$ is said to be \textbf{compatible with correlated equilibrium $q \in \Delta(A)$} if:
\begin{enumerate}
    \item for each $i$: $p_i = \mathrm{marg}_{A_i}(q)$,
    \item $q$ is a correlated equilibrium of the game.
\end{enumerate}
Furthermore, we say that $p$ is \textbf{compatible with a correlated equilibrium} if there exists a correlated equilibrium $q \in \Delta(A)$ such that $p$ is compatible with $q$. 
\end{definition}

Now consider the perspective of an external observer who knows, for each player $i$,
the marginal distribution $p_i$ and the payoff function $u_i$, but does not know
the underlying joint distribution over action profiles. The observer seeks to make a profit from the players based on her information subject to each player maximizing their utility. To each player $i$, she proposes a set of fees $f_i : A_i \to \mathbb R$ and a recommendation $\eta_i: A_i \to \Delta (A_i)$. The idea is as follows. Conditional on receiving recommended action $a_i$ from the mediator, the player must pay $f_i(a_i)$ to the observer. In exchange, the player gets recommended a new strategy, $\eta_i(a_i, \cdot) \in \Delta (A_i)$ such the new strategy constitutes a profitable deviation, net of the fee to the observer. That is, for all $a_i$ and $a_{-i}$:
\begin{equation*}
    u_i (a_i, a_{-i}) \leq \sum_{\tilde a_i \in A_i} \eta_i(a_i, \tilde a_i) u_i(\tilde a_i, a_{-i}) - f_i(a_i).
\end{equation*}

This inequality states that that player $i$ profitably deviates by playing the observer's recommendation $\eta_i$ and paying the fee to the observer. 

Under the assumption that each player's utility is quasi-linear in money, it is natural to assume that players can transfer utility among themselves via monetary transfers.\footnote{Alternatively, we could assume that each player's utility function corresponds to a monetary payoff or that players have perfectly transferable utility.} We assume that, for each action profile $a_i$, players can agree \textit{ex ante} to a set of transfers $h_i(a)$ such that $\sum_i h_i(a) =0$. If such transfers are feasible, the observer can profitably induce a deviation from $a=\left(a_i, a_{-i}\right)$ only when the total payoff at $a$ is no greater than the total payoff the observer can generate through the recommended deviations and associated transfers. Formally, this requires
\begin{equation} \label{eq:IC-sum}
    \sum_i u_i\left(a_i, a_{-i}\right) \leq \sum_i \sum_{\tilde{a}_i \in A_i} \eta_i\left({a}_i, \tilde{a}_i\right) u_i\left(\tilde{a}_i, a_{-i}\right)-\sum_i f_i\left(a_i\right).
\end{equation}

Here, the right-hand side represents the aggregate expected payoff the observer can generate by recommending alternative actions $\tilde{a}_i$ according to $\eta_i$, net of the transfers $f_i\left(a_i\right)$ that must be paid when players receive recommendation $a_i$. The observer seeks to make positive expected profits. This is true if and only if the following inequality holds. 
\begin{equation*}
    \sum_i \sum_{a_i} p_i(a_i) f_i(a_i) > 0
\end{equation*}

We say that a marginal strategy profile is \textbf{unexploitable by action-wise transfer schemes} if, for all possible deviations $\eta_i$ and transfers $f_i$ satisfying \eqref{eq:IC-sum}, the observer would make non-positive profits, that is $\sum_i \sum_{a_i} p_i(a_i) f_i(a_i) \leq 0$. We are now ready to state our main theorem.

\begin{theorem} \label{th:correlated_eq}
    A marginal strategy profile $p$ is compatible with a correlated equilibrium if, and only if, it is unexploitable by action-wise transfer schemes. 
\end{theorem}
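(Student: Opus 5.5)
The plan is to treat compatibility as the feasibility of a linear system in the unknown joint distribution $q$, and to obtain the theorem from a theorem of the alternative (Farkas' lemma / LP duality). The easy direction, compatibility implies unexploitability, I would check directly. The main work is to show that the dual certificate of infeasibility is exactly an exploiting action-wise transfer scheme.

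\emph{Necessity.} Suppose $q$ is a correlated equilibrium with $\mathrm{marg}_{A_i} q = p_i$ for all $i$, and let $(\eta_i, f_i)_i$ satisfy \eqref{eq:IC-sum} at every $a \in A$.
\begin{enumerate}
\item Multiply \eqref{eq:IC-sum} at $a$ by $q(a) \ge 0$ and sum over $a$. By the marginal condition, the fee terms give $\sum_i \sum_{a_i} p_i(a_i) f_i(a_i)$. This yields
\begin{equation*}
\sum_i \sum_{a_i} p_i(a_i) f_i(a_i) \le \sum_i \sum_{a_i} \sum_{\tilde a_i} \eta_i(a_i,\tilde a_i) \sum_{a_{-i}} q(a_i,a_{-i}) \left[u_i(\tilde a_i,a_{-i}) - u_i(a_i,a_{-i})\right].
\end{equation*}
\item Here I use $\sum_{\tilde a_i} \eta_i(a_i,\tilde a_i) = 1$ to move $u_i(a)$ inside the $\eta_i$-average.
\item Each inner sum over $a_{-i}$ is $\le 0$ by the correlated equilibrium inequalities, and the weights $\eta_i(a_i,\tilde a_i)$ are nonnegative. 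Hence the right-hand side is $\le 0$, so the observer's profit is non-positive.
\end{enumerate}

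\emph{Sufficiency.} I would use the contrapositive. Suppose no correlated equilibrium has marginals $p$. The primal system in $q \in \mathbb{R}^A$ is:
\begin{itemize}
\item $q \ge 0$;
\item $\sum_{a_{-i}} q(a_i,a_{-i}) = p_i(a_i)$ for all $i, a_i$;
\item $\sum_{a_{-i}} q(a_i,a_{-i})\left[u_i(\tilde a_i,a_{-i}) - u_i(a_i,a_{-i})\right] \le 0$ for all $i, a_i, \tilde a_i$.
\end{itemize}
The marginal equalities already force $\sum_a q(a) = 1$, so no separate normalization constraint is needed.

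Apply Farkas' lemma with free multipliers $f_i(a_i)$ on the equalities and multipliers $\lambda_i(a_i,\tilde a_i) \ge 0$ on the inequalities. Infeasibility gives $(f,\lambda)$ such that, for every $a \in A$,
\begin{equation*}
\sum_i f_i(a_i) \le \sum_i \sum_{\tilde a_i} \lambda_i(a_i,\tilde a_i)\left[u_i(\tilde a_i,a_{-i}) - u_i(a)\right],
\qquad\text{and}\qquad
\sum_i \sum_{a_i} p_i(a_i) f_i(a_i) > 0 .
\end{equation*}
I expect the main obstacle to be getting the sign conventions right in this step, which I would verify carefully.

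Then convert the multipliers into a recommendation.
\begin{enumerate}
\item Both conditions are positively homogeneous in $(f,\lambda)$. So I rescale by a constant $c>0$ so that $\sum_{\tilde a_i \ne a_i} \lambda_i(a_i,\tilde a_i) \le 1$ for every $i$ and $a_i$. The term $\tilde a_i = a_i$ contributes nothing and can be set to zero.
\item Define $\eta_i(a_i,\tilde a_i) = \lambda_i(a_i,\tilde a_i)$ for $\tilde a_i \ne a_i$. Put the remaining mass $1 - \sum_{\tilde a_i \ne a_i} \lambda_i(a_i,\tilde a_i)$ on $\tilde a_i = a_i$. This is a valid element of $\Delta(A_i)$.
\item Because $\eta_i(a_i,\cdot)$ sums to one, $\sum_{\tilde a_i}\lambda_i(a_i,\tilde a_i)[u_i(\tilde a_i,a_{-i}) - u_i(a)]$ equals $\sum_{\tilde a_i} \eta_i(a_i,\tilde a_i) u_i(\tilde a_i,a_{-i}) - u_i(a)$.
\item The first dual inequality therefore becomes exactly \eqref{eq:IC-sum}, with fees $f_i$. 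The second says the expected fee $\sum_i \sum_{a_i} p_i(a_i) f_i(a_i)$ is strictly positive.
\end{enumerate}
So $p$ is exploitable, which completes the contrapositive.
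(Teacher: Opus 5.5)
Your proposal is correct and follows the same route as the paper. For necessity you take $q$-expectations of \eqref{eq:IC-sum} and apply the correlated equilibrium inequalities; for sufficiency you apply Farkas' Lemma to the system (nonnegativity, marginal equalities, incentive constraints), with the fees as multipliers on the marginal equalities and the recommendations as multipliers on the incentive constraints.

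Your normalization step is more careful than the paper's proof of this theorem. That proof asserts that homogeneity alone lets one normalize every row $\eta_i(a_i,\cdot)$ to sum to one, which a single scalar cannot do. Your fix rescales so that the off-diagonal mass is at most one and puts the remainder on $\tilde a_i = a_i$, where the coefficient vanishes; this is exactly the argument the paper itself uses in its proof of \Cref{th:Nash}.
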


All the proofs are relegated to the appendix. The core of the argument relies on an application of Farkas' Lemma \citep{borderAlternative} to the system of inequalities capturing non-negativity of $p$, consistency between $p$ and $q$ and the correlated equilibrium incentives constraint. 

\Cref{th:correlated_eq} shows that, taking the marginals as given, there is a link between correlated equilibrium and a notion of rationality in a corresponding transferrable utility game. In particular, if, given a marginal strategy profile, an observer can recommend profitable deviations to players given the action the mediator suggested them to take, and derive a profit from it in expectations, then the strategy profile cannot be a correlated equilibrium. 

\begin{example}\label{ex:coordination}[Exploitation of Strictly Dominated Strategies]\\
Consider the following payoff matrix for a two person simultaneous move game.
\begin{equation}\label{eq:coordination}
    \scalebox{1.3}{$
\begin{array}{c|ccc}
  & L & M & R \\
\hline
T & 9,9 & 0,0 & 0,0 \\
B & 0,0 & 1,1 & 0,0
\end{array}
$}
\end{equation}

\noindent Clearly, action $R$ cannot appear in any correlated equilibrium, as it is strictly dominated by the mixed action $\tfrac12 L + \tfrac12 M$. Suppose, for contradiction, that a marginal strategy profile $p$ assigns positive probability to $R$, i.e.\ $p(R)>0$. Then the observer can construct the following contract. Whenever $R$ is recommended, advise switching to $L$ and $M$ with equal probability, i.e.\ $\eta_2(R,L)=\eta_2(R,M)=\tfrac12$. Define $S_\eta \in \R^A$ as the aggregate surplus generated when agents unilaterally deviate by following the recommendation $\eta$. 
$$ S_\eta (a) = \sum_i \sum_{\tilde{a}_i \in A_i} \eta_i\left(a_i,\tilde{a}_i\right) u_i\left(\tilde{a}_i, a_{-i}\right) - \sum_i  u_i(a_i,a_{-i})
$$
Consequently, inequality \ref{eq:IC-sum} can be expressed compactly as: 
\begin{equation*}
    \sum_{i} f_i(a) \leq S_\eta (a).
\end{equation*}
\noindent We can then display the surplus as a matrix (see \Cref{tab:game_1_surplus}).  
\begin{table}[h]
\centering
\caption{$S_\eta$ for \Cref{eq:coordination}}\label{tab:game_1_surplus}
\[
\scalebox{1.3}{$
\begin{array}{c|ccc}
  & L & M & R \\
\hline
T & 0 & 0 & \sfrac{1}{2} \\
B & 0 & 0 & \sfrac{1}{2}
\end{array}
$}
\]
\end{table}

By setting the transfer $f(R)=\tfrac12$, with $f(a)=0$ for all $a\neq R$, we generate positive expected profits for the observer while maintaining \Cref{eq:IC-sum}. Thus any strategy profile with $p_i (R)> 0$ is not compatible with a correlated equilibrium.  \hfill $\blacksquare$
\end{example}

\begin{example}\label{ex:miscoordination}[Exploitation of Miscoordination]\\
Sometimes a marginal strategy profile is not compatible with a correlated equilibrium due to the strategies of more than one player. For the game described in \Cref{eq:coordination}, consider the strategy profile $\tilde p$ where $\tilde p_1 (T) = \tilde p_1 (B) = \sfrac{1}{2}$, $\tilde p_2 (L) = \sfrac{1}{4}$ and $\tilde p_2(M) = 3/4$. Intuitively, the probability that $M$ is played seems too high for this strategy profile to be compatible with a correlated equilibrium. 

Indeed, $M$ is an optimal action only if the probability that $B$ is played conditional on $M$ is sufficiently high. To formalize this, consider the following deviation: $\tilde \eta_2 (M, L)  =1$ and $\tilde \eta_i (a_i, a_i') = 0$ whenever $a_i \neq M$ or $a_i^\prime \neq L$. In other words, the observer would recommend to play $L$ with probability 1 whenever the mediator tells player 1 to play $M$.

Once again, we can describe the surplus gained via a table.

\begin{table}[h]
\centering
\caption{$S_{\tilde \eta}$ for \Cref{eq:coordination}}\label{tab:game_1_surplus_2}
\[
\scalebox{1.3}{$
\begin{array}{c|ccc}
  & L & M & R \\
\hline
T & 0 & 9 & 0 \\
B & 0 & -1 & 0
\end{array}
$}
\]
\end{table}

\noindent We see that the deviation proposed implies a tradeoff. A positive surplus is achieved  whenever $(T, M)$ would be played, but a negative surplus is achieved whenever $(B, M)$ would be played. By setting $f_2 (M) =9$, $f_1(B) = -10$ and $f_i(a_i) = 0$ if $a_i \notin \{M, B\}$, the observer gets profits of $7/4$ in expectation, and thus $\tilde{p}$ is not compatible with a correlated equilibrium. \hfill $\blacksquare$

\end{example}

\subsection{Nash Equilibrium} \label{sec: Nash equilibrium}

To provide more context for \Cref{th:correlated_eq}, we extend our line of inquiry to a different solution concept, Nash Equilibrium. Since Nash Equilibria are defined via independent mixtures of marginal strategy profiles, given our data assumption, the analyst is able to directly test whether a marginal strategy profile and a collection of utility functions constitute a Nash Equilibrium. Nonetheless, in this section we develop an analogue of our unexploitability condition from \Cref{sec:correlated equilibrium} for Nash Equilibria in order to understand the content of corelation in the context of the recommendation story. Recall the definition of a Nash Equilibirum.
\begin{definition}
    A marginal strategy profile $p \equiv (p_1,\dots,p_N)$ is a \textbf{Nash Equilibrium} if for all $i \in N$ and all $a_i,a_i' \in A_i$ with $p_i(a_i) >0$, we have the following. 
    \begin{equation*}
        \sum_{a_{-i}}\left(\prod_{j\neq i} p_j(a_{j}) \right) \left[u_i(a_i,a_{-i})-u_i(a_i',a_{-i})\right] \geq 0
    \end{equation*}.
\end{definition}
In words, a strategy profile is a Nash Equilibrium if for each player, every action in the support is weakly better than any other actions in expectation. 
Now, under the assumption that players have quasi-linear utility in money, imagine the point of view of an observer who seeks to derive a profit from the player, as in \Cref{sec:correlated equilibrium}. In particular, she knows $p$ and $u$, and can offer to each player a set of fees $t_i: A \to \mathbb R$ in exchange for a recommendation $\eta_i: A_i \to \Delta (A_i)$. Player $i$ accepts the deal if and only if following the recommendation and accepting the fee constitutes a profitable deviation, that is, for every $a \in A$. 
\begin{equation*}
    u_i(a_i, a_{-i}) \leq \sum_{a_i^\prime } \eta_i(a_i, a_i^\prime) u_i(a_i ^\prime, a_{-i}) - t_i(a)
\end{equation*}

Because of the quasi-linearity assumption, for this equation to hold for each player, there need only exist some ex ante transfer ($h_i$ from \Cref{sec:correlated equilibrium}) between players such that the equation holds when aggregated across players. 
\begin{equation} \label{eq:Nash_sum_IC}
    \sum_{i \in N} u_i(a_i, a_{-i}) \leq \sum_{i \in N} \sum_{a_i^\prime } \eta_i(a_i, a_i^\prime) u_i(a_i ^\prime, a_{-i}) - \sum_{i \in N} t_i(a)
\end{equation}
In a Nash Equilibrium, the joint strategy profile defining play is an independent mixture of the corresponding the joint distribution. As such, Nash Equilibrium play is fully characterized by marginals over actions. As in \Cref{sec:correlated equilibrium}, the observer seeks to make a profit from the fees, but this time, she has access to the joint distribution of actions. Defining $f(a) = \sum_{i \in N} t_i(a)$, the observer seeks to derive a profit in expectation from the trades she makes. 
\begin{equation} \label{eq:Nash_expectation}
    \sum_{a \in A} \left ( \prod_{i} p_i(a_i) \right )f(a) > 0
\end{equation}

We say that a strategy profile is \textbf{unexploitable by profile-wise transfer schemes} if for all possible deviations $\eta_i \in \Delta(A_i)$ and transfer schemes $t_i \in \R^A_i$, both \eqref{eq:Nash_sum_IC} and \eqref{eq:Nash_expectation} cannot be satisfied. 

\begin{theorem} \label{th:Nash}
    A strategy profile is a Nash equilibrium if, and only if, it is unexploitable by profile-wise transfer schemes.
\end{theorem}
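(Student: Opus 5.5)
The plan is a direct argument with no duality. For a fixed recommendation profile $\eta=(\eta_i)_{i\in N}$, the largest aggregate fee compatible with \eqref{eq:Nash_sum_IC} is, pointwise in $a$, the surplus $S_\eta(a)$ defined in \Cref{ex:coordination}. Any admissible $f=\sum_i t_i$ therefore satisfies $f\le S_\eta$. Conversely, $f=S_\eta$ is attainable, for instance by setting $t_1=S_\eta$ and $t_j=0$ for $j\neq 1$. Hence $p$ is exploitable by profile-wise transfer schemes if and only if there is some $\eta$ with
\begin{equation*}
\sum_{a\in A}\Big(\prod_{i}p_i(a_i)\Big)S_\eta(a)>0 .
\end{equation*}
This reduces the theorem to understanding the sign of this single expectation.

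The key step is to compute this expectation using the product structure of the distribution. Write $U_i(x)=\sum_{a_{-i}}\big(\prod_{j\ne i}p_j(a_j)\big)u_i(x,a_{-i})$ for player $i$'s expected payoff from action $x$ against $p_{-i}$. Because $S_\eta(a)$ is a sum over $i$ of terms that depend on $a$ only through $a_i$ and the payoff $u_i(\cdot,a_{-i})$, independence lets us integrate out $a_{-i}$ separately for each $i$. This gives
\begin{equation*}
\sum_{a}\Big(\prod_{i}p_i(a_i)\Big)S_\eta(a)=\sum_{i\in N}\sum_{a_i\in A_i}p_i(a_i)\sum_{\tilde a_i\in A_i}\eta_i(a_i,\tilde a_i)\big[U_i(\tilde a_i)-U_i(a_i)\big].
\end{equation*}
This is the only step where the product form of the joint distribution is used, and it is the step I would check most carefully. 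The point is that, unlike the correlated case, the fee can be tailored to $a$ and the expectation decouples across players.

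For the ``only if'' direction, suppose $p$ is a Nash equilibrium. Every $a_i$ with $p_i(a_i)>0$ then satisfies $U_i(a_i)\ge U_i(\tilde a_i)$ for all $\tilde a_i$. Each summand above is therefore nonpositive, because terms with $p_i(a_i)=0$ vanish and the remaining terms carry nonpositive brackets. So $\sum_a\prod_i p_i(a_i)f(a)\le \sum_a\prod_i p_i(a_i)S_\eta(a)\le 0$, and \eqref{eq:Nash_expectation} fails.

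For the ``if'' direction, suppose $p$ is not a Nash equilibrium. Then there are $i$, $a_i$ with $p_i(a_i)>0$, and $a_i'$ with $U_i(a_i')>U_i(a_i)$. I would take $\eta_i(a_i,a_i')=1$ and let every other recommendation (for every player and every other action) be the identity, i.e.\ $\eta_j(b_j,b_j)=1$. With $f=S_\eta$, implemented as above, the displayed expectation equals $p_i(a_i)\big(U_i(a_i')-U_i(a_i)\big)>0$, so $p$ is exploitable. The argument is elementary, and I expect no real obstacle beyond bookkeeping in the decoupling identity.
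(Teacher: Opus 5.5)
Your argument is correct, but it follows a different route from the paper. The paper dualizes. It applies Farkas' alternative to the feasibility system in $\tilde q$ whose equality constraints pin $\tilde q$ to $q=\prod_i p_i$, and reads off multipliers $\eta_i(a_i,a_i')\ge 0$ and a profile-wise fee $f\in\R^A$. It then normalizes them: it rescales so that each off-diagonal mass $\sum_{a_i'\neq a_i}\eta_i(a_i,a_i')$ is at most one, and fills in the diagonal, which is harmless because its coefficient $u_i(a)-u_i(a_i,a_{-i})$ vanishes. You instead eliminate the fees directly, since for fixed $\eta$ the best admissible aggregate fee is $S_\eta$. You then use independence to rewrite $\sum_a\prod_i p_i(a_i)S_\eta(a)$ as a sum of $\eta$-weighted expected regrets. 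Your ``only if'' half is then just the ($\Rightarrow$) argument of \Cref{th:correlated_eq} specialized to $q=\prod_i p_i$; a product distribution satisfies the correlated-equilibrium inequalities exactly when its marginals form a Nash equilibrium. Your ``if'' half replaces duality with an explicit scheme: one player, one action in the support, one pure deviation, with identity recommendations elsewhere doing the job of the paper's diagonal fill-in. The paper's route buys a visible parallel with \Cref{th:correlated_eq}. The two results differ only in which equality constraints are dualized: the marginals, yielding action-wise fees $f_i(a_i)$, or the full joint, yielding profile-wise fees $f(a)$. But because the primal feasible set here is a single point, Farkas does no real work and produces no explicit exploiting scheme. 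Your route is elementary, constructive, and quantitative. Maximizing your identity over $\eta$ shows that the observer's best expected profit equals $\sum_i\sum_{a_i}p_i(a_i)\bigl[\max_{\tilde a_i}U_i(\tilde a_i)-U_i(a_i)\bigr]$, a Nash-gap function. What it does not give is an extension to the correlated case. There the expectation of $S_\eta$ cannot be decoupled because the joint distribution is unknown, so the observer must be profitable against every joint with the given marginals, and a separation argument is genuinely needed.
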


\section{Conclusion} \label{sec:conclusion}
In this paper, we provide a dual characterization of correlated equilibrium in terms of exploitability by an outside observer. We show that a marginal strategy profile is compatible with some correlated equilibrium if and only if there exists no system of transfers and action-contingent recommendations that results in the outsider observer making a positive expected gain. This equivalence reframes the incentive constraints defining correlated equilibrium as a joint rationality condition at the level of the group of players.

While not our main motivation, we can also view our characterization as a normative foundation for correlated equilibria. Our main axiom asks that marginal strategy profiles be unexploitable by action-wise transfers. This condition can be interpreted as a notion of group stability. The group of players choose a strategy profile for which no outside observer is incentivized to recommend unilateral deviations. This is reminiscent of \citet{Nau1990} which characterizes which actions can be played in some correlated equilibrium via a no arbitrage condition. Our characterization differs in that we characterize which marginal probabilities (i.e. marginal action profiles) can be played in some correlated equilibrium.

The characterization is useful for applications in which behavior is observed only through empirical distributions of actions. In such settings, checking compatibility with correlated equilibrium can be reframed as testing whether the observed behavior admits a profitable transfer scheme, a problem that naturally takes the form of a linear programming problem. This perspective is particularly useful for detecting collusion or pre-play communication. Empirically, collusion may manifest itself in action distributions that are incompatible with any Nash equilibrium but compatible with a correlated equilibrium. Our characterization provides a tractable test for this distinction, based solely on the exploitability of observed behavior by transfer schemes.

\appendix
\section{Proofs}

\begin{proof}[Proof of \Cref{th:correlated_eq}]
We prove both directions.

\medskip
\noindent
\textbf{($\Rightarrow$)}  
Let $p$ be compatible with a correlated equilibrium $q$. We show that $p$ is unexploitable by action-wise transfer schemes. Suppose, towards a contradiction, that it is exploitable. Then there exist vectors $f_i \in \mathbb{R}^{A_i}$ and kernels $\eta_i \in \Delta(A_i)^{A_i}$ such that, for each $i\in N$,
\begin{equation}
\sum_{a_i\in A_i} p_i(a_i) f_i(a_i) > 0,
\label{eq:exante}
\end{equation}
and for all $a\in A$,
\begin{equation}
\sum_{i\in N}\bigl[u_i(a)+f_i(a_i)\bigr]
\;\le\;
\sum_{i\in N}\sum_{a_i'\in A_i}\eta_i(a_i,a_i')\,u_i(a_i',a_{-i}).
\label{eq:utilitarian}
\end{equation}

Taking expectations of \eqref{eq:utilitarian} with respect to $q$ yields
\begin{align}
\sum_{i\in N}\sum_{a} q(a) f_i(a_i)
&\le
\sum_{i\in N}\sum_{a}\sum_{a_i'} \eta_i(a_i,a_i') q(a)
\bigl[u_i(a_i',a_{-i})-u_i(a)\bigr].
\label{eq:expectation}
\end{align}

Rearranging the right-hand side,
\[
\sum_{i\in N}\sum_{a_i}\sum_{a_i'} \eta_i(a_i,a_i')
\sum_{a_{-i}} q(a_i,a_{-i})
\bigl[u_i(a_i',a_{-i})-u_i(a_i,a_{-i})\bigr].
\]
Since $\eta_i(a_i,a_i')\ge 0$ and $q$ is a correlated equilibrium, each inner sum is nonpositive, hence the entire right-hand side of \eqref{eq:expectation} is weakly negative.

On the left-hand side, using compatibility of $p$ and $q$,
\[
\sum_{i\in N}\sum_{a} q(a) f_i(a_i)
=
\sum_{i\in N}\sum_{a_i} p_i(a_i) f_i(a_i)
> 0,
\]
by \eqref{eq:exante}. This contradicts \eqref{eq:expectation}. Therefore, $p$ is unexploitable.

\medskip
\noindent
\textbf{($\Leftarrow$)}  
Suppose $p$ is not compatible with any correlated equilibrium. Then there exists no function $q:A\to\mathbb{R}_+$ satisfying:
\begin{itemize}
\item for all $i\in N$ and $a_i,a_i'\in A_i$,
\[
\sum_{a_{-i}} q(a_i,a_{-i})
\bigl[u_i(a_i,a_{-i})-u_i(a_i',a_{-i})\bigr]\ge 0,
\]
\item for all $i\in N$ and $a_i\in A_i$,
\[
p_i(a_i)=\sum_{a_{-i}} q(a_i,a_{-i}),
\]
\item for all $a\in A$, $q(a)\ge 0$.
\end{itemize}

By Farkas' Lemma, the infeasibility of this system implies the existence of vectors
$f_i\in\mathbb{R}^{A_i}$ and nonnegative coefficients $\eta_i(a_i,a_i')\ge 0$ such that
\begin{equation}
\sum_{i\in N}\sum_{a_i\in A_i} p_i(a_i) f_i(a_i) > 0,
\label{eq:expected}
\end{equation}
and for all $a\in A$,
\begin{equation}
\sum_{i\in N}\sum_{a_i'\in A_i}
\eta_i(a_i,a_i')
\bigl[u_i(a_i,a_{-i})-u_i(a_i',a_{-i})\bigr]
+
\sum_{i\in N} f_i(a_i)
\le 0.
\label{eq:dual}
\end{equation}

Rewriting \eqref{eq:dual},
\[
\sum_{i\in N} u_i(a)
\;\le\;
\sum_{i\in N}\sum_{a_i'\in A_i}
\eta_i(a_i,a_i')\,u_i(a_i',a_{-i})
-
\sum_{i\in N} f_i(a_i).
\]

Since the dual inequalities are homogeneous, we may rescale the coefficients without loss of generality. In particular, we may normalize so that
$\sum_{a_i'} \eta_i(a_i,a_i') = 1$ for each $i$ and $a_i$. Then each $\eta_i(a_i,\cdot)$ lies in $\Delta(A_i)$, and the above inequality shows that $(f_i,\eta_i)$ constitute an action-wise transfer scheme that exploits $p$, together with \eqref{eq:expected}.

Hence $p$ is exploitable, completing the proof.
\end{proof}

\begin{proof}[Proof of \Cref{th:Nash}]
Fix a mixed strategy profile $(p_i)_{i\in N}$ and let $q\in\R^{A}$ denote the induced joint distribution,
\[
q(a)\coloneqq \prod_{i\in N} p_i(a_i)\qquad\text{for all }a\in A.
\]
Then $(p_i)_{i\in N}$ is a Nash equilibrium if and only if $q$ satisfies, for all $i\in N$ and all $a_i,a_i'\in A_i$,
\begin{equation}\label{eq:nash_ineq_q}
\sum_{a_{-i}} q(a_i,a_{-i})\bigl[u_i(a_i,a_{-i})-u_i(a_i',a_{-i})\bigr]\ge 0.
\end{equation}

Consider the following feasibility problem in the variables $\tilde q\in\R^{A}$:
\begin{itemize}
    \item[(i)] For all $i\in N$ and all $a_i,a_i'\in A_i$,
    \[
    \sum_{a_{-i}} \tilde q(a_i,a_{-i})\bigl[u_i(a_i,a_{-i})-u_i(a_i',a_{-i})\bigr]\ge 0.
    \]
    \item[(ii)] For all $a\in A$, $\tilde q(a)=q(a)$.
    \item[(iii)] For all $a\in A$, $\tilde q(a)\ge 0$.
\end{itemize}
Since (ii) pins down $\tilde q$ uniquely, the system is feasible if and only if the fixed vector $q$ satisfies \eqref{eq:nash_ineq_q}, i.e., if and only if $(p_i)_{i\in N}$ is a Nash equilibrium.

Apply Farkas's Alternative (Theorem 38 in \citet{borderAlternative}) to the feasibility system (i)--(iii), after moving the left-hand side of each inequality in (i) to the right-hand side. We obtain that (i)--(iii) is feasible if and only if there do \emph{not} exist multipliers $\eta_i(a_i,a_i')\ge 0$ for all $i\in N$ and $a_i,a_i'\in A_i$, and numbers $f(a)\in\R$ for all $a\in A$, such that
\begin{align}
\label{eq:dual_profit}
0> & \sum_{a\in A} f(a)\,q(a)\,,\\
\label{eq:dual_domination}
f(a) &\ge \sum_{i\in N}\sum_{a_i'\in A_i}\eta_i(a_i,a_i')\bigl[u_i(a)-u_i(a_i',a_{-i})\bigr]
\qquad\text{for all }a\in A.
\end{align}

We now justify the normalization of the multipliers. Suppose $(\eta,f)$ satisfies \eqref{eq:dual_profit}--\eqref{eq:dual_domination}. Then for any scalar $t>0$, the pair $(t\eta,tf)$ also satisfies \eqref{eq:dual_profit}--\eqref{eq:dual_domination}. Indeed, \eqref{eq:dual_domination} is homogeneous of degree one in $(\eta,f)$, and \eqref{eq:dual_profit} becomes $t\sum_a f(a)q(a)<0$, which remains true for all $t>0$. Hence we are free to rescale $(\eta,f)$ by an arbitrary positive constant.

Using this freedom, choose $t>0$ so that
\[
\sum_{a_i'\neq a_i} (t\,\eta_i(a_i,a_i'))\le 1
\quad\text{for all } i\in N,\ a_i\in A_i.
\]
(For instance, if $M\coloneqq \max_{i,a_i}\sum_{a_i'\neq a_i}\eta_i(a_i,a_i')>0$, take $t=1/M$, and if $M=0$ there is nothing to do.)
Then define
\[
\eta_i(a_i,a_i)\;=\;1-\sum_{a_i'\neq a_i}\eta_i(a_i,a_i').
\]
This does not affect \eqref{eq:dual_domination}, since the coefficient on $\eta_i(a_i,a_i)$ is
$u_i(a)-u_i(a_i,a_{-i})=0$ for every $a\in A$. Under this convention, for each $(i,a_i)$ the vector
$\eta_i(a_i,\cdot)$ is a probability distribution over $A_i$, and \eqref{eq:dual_domination} is equivalent to
\begin{equation}\label{eq:dual_domination_norm}
f(a)\;\ge\;\sum_{i\in N}\left[u_i(a)-\sum_{a_i'\in A_i}\eta_i(a_i,a_i')\,u_i(a_i',a_{-i})\right]
\qquad\text{for all }a\in A.
\end{equation}

By construction, the existence of $(\eta,f)$ satisfying \eqref{eq:dual_profit}--\eqref{eq:dual_domination} (equivalently, \eqref{eq:dual_profit} and \eqref{eq:dual_domination_norm}) is exactly the negation of feasibility of (i)--(iii). Therefore, the system (i)--(iii) is feasible if and only if there is no such $(\eta,f)$. By multiplying \eqref{eq:dual_profit} and \eqref{eq:dual_domination_norm} by negative one, we recover the equations defining unexploitable by profile-wise transfer schemes. In conclusion, $(p_i)_{i\in N}$ is a Nash equilibrium if and only if it is unexploitable by profile-wise transfer schemes.
\end{proof}

\bibliographystyle{ecta}
\bibliography{CorM}

\end{document}